\documentclass[english]{article}
\usepackage{mathtext}
\usepackage{fontenc}
\usepackage{amssymb}
\usepackage{amsmath}
\usepackage{amsthm}
\usepackage{floatflt}
\usepackage{wrapfig}
\usepackage[T1]{fontenc}
\usepackage[english]{babel}

\usepackage{caption2}

\usepackage{cmap}
\usepackage{hyperref}

\newtheorem{theorem}{Theorem}
\newtheorem{definition}{Definition}
\newtheorem{lemma}{Lemma}
\newtheorem{proposition}{Proposition}
\newtheorem{corollary}{Corollary}

\newtheorem{remark}{Remark}

\begin{document}

\author{\bf Yuriy Tarannikov\footnote{e-mail: yutarann@gmail.com}}

\title{On a Boolean function without bold folding in the spectrum support and implications for greedy approaches to PDT depth}
\date{}

\maketitle

\begin{abstract}
We study Boolean functions and their Fourier spectrum supports in the context of parity decision trees (PDTs). Recently, H.~Hatami et al.~\cite{HHL+} constructed examples whose Fourier support \(\mathcal S\) satisfies
\[
|(\mathcal S+\gamma_1)\cap(\mathcal S+\gamma_2)|=O(|\mathcal S|^{5/6})
\]
for all distinct \(\gamma_1,\gamma_2\), thereby refuting a natural greedy approach based on finding a single large folding direction.
We strengthen this folding estimate by constructing an explicit infinite family of Boolean functions such that
\[
|(\mathcal S+\gamma_1)\cap(\mathcal S+\gamma_2)|=O(|\mathcal S|^{1/2})
\]
for all distinct \(\gamma_1,\gamma_2\). The construction uses a special affine subspace partition, called an APLPS-partition, obtained from full linear spreads. In contrast with the probabilistic construction of \cite{HHL+}, our construction is explicit and has no background spectral components.
We also discuss consequences for greedy approaches to PDT construction. Under the <<lazy>> assumption that the maximum-folding bound is inherited by all restrictions, the usual folding-counting argument cannot yield a PDT upper bound better than \(O(|\mathcal S|^{1/2})\), matching the known general upper bound. However, this inheritance assumption is false in general; hence our result refutes only this <<lazy>> maximum-folding approach, while a complete refutation of adaptive greedy strategies remains open.
\end{abstract}

{\it Key words:} Boolean functions, spectrum support, folding, parity decision trees (PDT), log-rank conjecture, XOR functions, vector space partitions, linear spreads, greedy algorithms.

MSC 68Q11

\section{Introduction}

Boolean functions --- mappings from $\mathbb F_2^n$ to $\{0,1\}$ or $\{-1,1\}$ --- are fundamental objects in complexity theory, cryptography, and discrete mathematics. Their spectral analysis, based on the Fourier transform over the group $\mathbb F_2^n$, allows one to relate combinatorial properties of functions to algebraic characteristics such as matrix rank or spectrum support size. In this work we study the structure of the spectrum support of Boolean functions in the context of parity decision trees (PDTs) and their connection to the log-rank conjecture for XOR functions.

Let us clarify our notation. In cryptography and coding theory (e.g., \cite{Car21,LSSY15,KLT}), Boolean functions are often viewed as mappings to $\{0,1\}$, and the Walsh coefficients are defined as
$$
W_f(u)=\sum_{x\in\mathbb F_2^n} (-1)^{f(x)+\langle u,x\rangle},
$$
where $\langle u,x\rangle = \left(\sum\limits_{i=1}^{n} u_i x_i\right)\bmod 2$.

In complexity theory (in particular, \cite{ODon14,HHL+}), it is common to use functions taking values in $\pm1$ and Fourier coefficients
$$
\widehat f(\alpha)=\frac{1}{2^n}\sum_{x\in\mathbb F_2^n} f(x)(-1)^{\langle \alpha,x\rangle}.
$$
These two approaches are equivalent: if $g(x)=(-1)^{f(x)}$, then $\widehat g(\alpha)=2^{-n}W_f(\alpha)$. The spectrum support (the set of nonzero coefficients) is independent of normalisation. In the sequel we follow the convention of \cite{HHL+}, i.e. we consider functions with values in $\pm1$ and use the normalised Fourier transform unless stated otherwise.

For a Boolean function $f:\mathbb F_2^n\to\{-1,1\}$, we define its spectrum support
$$
\mathcal S=\operatorname{supp}\widehat f=\{\alpha\in\mathbb F_2^n\mid \widehat f(\alpha)\ne0\},
$$
and denote $k=|\mathcal S|$.

A central notion in our study is that of a \emph{folding} in the spectrum support. Following \cite{MS24}, for $\gamma\in\mathbb F_2^n$ we define the set
$$
\mathcal O_\gamma=\left\{(\alpha,\beta)\in\binom{\mathcal S}{2}\mid \alpha+\beta=\gamma\right\}.
$$
The elements of $\mathcal O_\gamma$ are unordered pairs of distinct vectors from $\mathcal S$ whose sum is $\gamma$. Such pairs are said to form a \emph{folding} in direction $\gamma$. If $|\mathcal O_\gamma|\ge 2$, then the direction $\gamma$ contains at least two distinct pairs in the folding; if $|\mathcal O_\gamma|$ is large, we speak of a <<bold>> folding.

In \cite{HHL+} an equivalent but less transparent notation is used, namely intersections of shifts: for distinct $\gamma_1,\gamma_2$ one considers
$$
(\mathcal S+\gamma_1)\cap(\mathcal S+\gamma_2),
$$
where $\mathcal S+\gamma=\{\alpha+\gamma\mid \alpha\in\mathcal S\}$. It is easy to see that if $\gamma=\gamma_1+\gamma_2$, then
$$
|(\mathcal S+\gamma_1)\cap(\mathcal S+\gamma_2)| = 2|\mathcal O_\gamma|,
$$
since each intersection gives a pair of endpoints corresponding to one pair in $\mathcal O_\gamma$. Thus the two descriptions are equivalent up to a factor of~2.

The study of the structure of folding relies on the classical theorem of Titsworth \cite{Tit63}, which gives a necessary and sufficient condition for a set of coefficients $\{\widehat f(\alpha)\}$ to correspond to a Boolean function. For our purposes, the following consequence is important: for any nonzero $\gamma$,
$$
\sum_{\alpha\in\mathbb F_2^n} \widehat f(\alpha)\widehat f(\alpha+\gamma)=0.
$$
In particular, this forbids the existence of isolated pairs in $\mathcal O_\gamma$: if $(\alpha,\beta)\in\mathcal O_\gamma$, then there must exist at least one other pair from the spectrum support in the same direction so that the corresponding products cancel. This property is crucial when analysing possible folding sizes.

We now turn to the motivation coming from communication complexity. For a two-party Boolean function $F:X\times Y\to\{-1,1\}$, its communication complexity $\mathrm{CC}(F)$ is the minimum number of bits that must be exchanged between two players to compute $F(x,y)$ given their respective inputs. The log-rank conjecture of Lovasz and Saks \cite{LS88} states that $\mathrm{CC}(F)\le \operatorname{polylog}(\operatorname{rank} M_F)$, where $M_F$ is the communication matrix of size $|X|\times|Y|$ and the rank is over $\mathbb R$. Despite much effort, this conjecture remains open.

Of particular interest is the class of XOR functions, where $F(x,y)=f(x\oplus y)$ for some $f:\mathbb F_2^n\to\{-1,1\}$. In this case the rank of the matrix of $F$ is exactly $|\mathcal S|$, the size of the spectrum support of $f$ (proved in \cite{BC99}, see also \cite{Gib24}). Moreover, the work \cite{HHL18} establishes a polynomial equivalence between $\mathrm{CC}(f\circ\oplus)$ and the parity decision tree complexity (PDT) of $f$. A PDT is a binary tree where each internal node is labelled by a character $\chi_\alpha$ (i.e. it queries the parity $\langle\alpha,x\rangle$ of a subset of bits), and leaves are labelled with $\pm1$. The depth of a PDT for $f$, denoted $\mathrm{PDT}(f)$, provides a natural upper bound on $\mathrm{CC}(f\circ\oplus)$. A PDT of depth $D$ gives a deterministic protocol of cost $O(D)$, since each parity query of $x\oplus y$ can be answered with constant communication.

Currently, it is known that for every Boolean function $f$, $\mathrm{PDT}(f)=O(\sqrt{k})$, where $k=|\mathcal S|$ \cite{TWXZ13,MS24}. This corresponds to the upper bound $O(\sqrt{\operatorname{rank}})$ for communication complexity, which was recently obtained also in the general case \cite{ST25}.

To make progress towards a polylogarithmic bound, a so-called \emph{greedy approach} has been proposed: at each step, find a direction $\gamma$ with a bold folding size $|\mathcal O_\gamma|$ (equivalently, a large intersection $|(\mathcal S+\gamma_1)\cap(\mathcal S+\gamma_2)|$), query the corresponding parity $\langle\gamma,x\rangle$, and recurse on restrictions whose spectrum support shrinks. If at each step one can find a folding of size at least $k/\operatorname{polylog}(k)$, then the PDT depth would be polylogarithmic.

However, a recent work \cite{HHL+} showed that there exist Boolean functions for which all folding have size at most $O(k^{5/6})$. This rules out the possibility of obtaining polylogarithmic depth via the <<lazy>> greedy approach that simply substitutes a universal bound at each step. Moreover, in the terminology of \cite{HHL+}, this means that the corresponding greedy method cannot yield an upper bound better than $\widetilde O(k^{1/6})$, and with an optimized choice of parameters one can rule out bounds better than $\widetilde O(k^{1/5})$.

In the present work we significantly strengthen the result of \cite{HHL+}. We construct an explicit infinite family of Boolean functions for which the maximum folding size in the spectrum support is at most $O(k^{1/2})$. This is achieved by means of a special affine subspace partition (APLPS) based on full linear spreads. Our construction avoids the use of <<background>> components present in \cite{HHL+} and, in contrast to their probabilistic construction, is fully explicit.

The main result can be stated informally as follows.

\begin{theorem}[Informal version]
For every $d\ge1$ there exists a Boolean function $f$ on $n=(2d+1)+2^{d+1}$ variables with spectrum support $\mathcal S$, $|\mathcal S|=2^{2d+2}$, such that for any distinct $\gamma_1,\gamma_2$,
$$
|(\mathcal S+\gamma_1)\cap(\mathcal S+\gamma_2)| = O(|\mathcal S|^{1/2}).
$$
\end{theorem}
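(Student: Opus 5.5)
The plan is to realise $\mathcal S$ as a disjoint union of copies of linear subspaces, placed in separate coordinate blocks. Write $m=2d+1$ and $N=2^{d+1}$, so that $n=m+N$, and write the variables as $(x,y)\in\mathbb F_2^m\times\mathbb F_2^N$. Suppose $\mathbb F_2^m=\bigsqcup_{i=1}^N (a_i+U_i)$ is a partition into affine subspaces, and let $i(x)$ denote the index of the part containing $x$. Define
$$
f(x,y)=(-1)^{y_{i(x)}}.
$$
A direct computation gives $\widehat f(\alpha,\beta)=0$ unless $\beta=e_i$ for some $i$. In that case
$$
\widehat f(\alpha,e_i)=2^{-m}\sum_{x\in a_i+U_i}(-1)^{\langle\alpha,x\rangle},
$$
which is nonzero if and only if $\alpha\in U_i^\perp$. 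Hence
$$
\mathcal S=\bigsqcup_{i=1}^N \bigl(U_i^\perp\times\{e_i\}\bigr),\qquad |\mathcal S|=\sum_i 2^{\,m-\dim U_i},
$$
and $\mathcal S$ has no background components. To obtain $|\mathcal S|=2^{2d+2}$ with $N=2^{d+1}$ parts, I would aim for $\dim U_i^\perp=d+1$ for every $i$. The count then works out: $\sum_i|a_i+U_i|=N\cdot 2^{d}=2^{2d+1}$.

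Next I classify the foldings by direction $\gamma=(\delta,\varepsilon)$. Each element of $\mathcal S$ has $\beta$-part of weight one, so only two cases can occur.

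\emph{Case 1: $\varepsilon=e_i+e_j$ with $i\ne j$.} The pairs in $\mathcal O_\gamma$ are in bijection with $U_i^\perp\cap(U_j^\perp+\delta)$. This set is empty or a coset of $U_i^\perp\cap U_j^\perp$.

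\emph{Case 2: $\varepsilon=0$ and $\delta\neq0$.} Here
$$
|\mathcal O_\gamma|=\sum_{i:\ \delta\in U_i^\perp}\tfrac12|U_i^\perp|=2^{d}\cdot\#\{i:\delta\in U_i^\perp\}.
$$

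The theorem therefore reduces to a purely linear-algebraic requirement, which is what I take the APLPS-partition to mean. We need an affine partition of $\mathbb F_2^{2d+1}$ into $2^{d+1}$ cosets of codimension $d+1$ such that two conditions hold:
\begin{itemize}
\item[(a)] $\dim(U_i^\perp\cap U_j^\perp)=O(1)$ for all $i\ne j$;
\item[(b)] every nonzero $\delta$ lies in $O(1)$ of the spaces $U_i^\perp$.
\end{itemize}
Under (a) and (b), both cases give $|\mathcal O_\gamma|=O(2^{d})=O(|\mathcal S|^{1/2})$. The relation $|(\mathcal S+\gamma_1)\cap(\mathcal S+\gamma_2)|=2|\mathcal O_{\gamma_1+\gamma_2}|$ then finishes the proof.

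The main obstacle is constructing this partition. My first attempt would start from a full linear spread $\{W_0,\dots,W_{N}\}$ of $\mathbb F_2^{2d+2}$, consisting of $N+1=2^{d+1}+1$ subspaces of dimension $d+1$ that intersect pairwise trivially. I would then view $\mathbb F_2^{2d+1}$ as a hyperplane $H$, or dually as a quotient, so that restricting or projecting the spread yields $N$ spaces $U_i^\perp$ of dimension about $d+1$. These inherit $\dim(U_i^\perp\cap U_j^\perp)\le1$, which gives (a), and each nonzero vector of $\mathbb F_2^{2d+2}$ lies in exactly one $W_k$, which after projection gives (b) with constant $2$. The delicate part is choosing the translates $a_i$ so that the cosets $a_i+U_i$ actually tile $\mathbb F_2^{2d+1}$. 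Disjointness of $a_i+U_i$ and $a_j+U_j$ requires $a_i+a_j\notin U_i+U_j$, where $U_i+U_j$ is typically a hyperplane, so the translates have to be chosen compatibly across all pairs. I would try to resolve this by dropping one spread element, say $W_0$, and using it to index the translates. Concretely, I would take $H\supset W_0$ or a complementary choice, and let the $a_i$ run over a set of representatives determined by $W_0$. I would then verify the tiling by a counting argument together with pairwise disjointness.
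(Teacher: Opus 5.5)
\paragraph{Verdict.} There is a genuine gap: your reduction matches the paper's, but the tiling construction that carries the theorem is not carried out.

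\paragraph{What is correct.} The reduction is the paper's own argument: the same function $f(x,y)=(-1)^{y_{i(x)}}$, the same support $\bigcup_i U_i^\perp\times\{e_i\}$, and the same two cases. So everything hinges on producing a tiling of $\mathbb F_2^{2d+1}$ by $2^{d+1}$ cosets $a_i+U_i$ satisfying (a) and (b). That construction is the real content of the theorem (the paper's Lemmas~\ref{spread_lemma_2} and~\ref{spread_lemma_3}), and your proposal does not supply it.

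\paragraph{Where the sketch goes wrong.} Two things fail.

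First, the primal and dual roles are mixed up. Restricting a spread of $\mathbb F_2^{2d+2}$ to a linear hyperplane gives $d$-dimensional spaces (all but one of them). These can serve as the directions $U_i$, but not as the $U_i^\perp$. Indeed, if $\dim U_i^\perp=d$, then $\dim U_i=d+1$, and $N$ such cosets have total size $2^{2d+2}>2^{2d+1}$.

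The $(d+1)$-dimensional $U_i^\perp$ arise only in the quotient $\pi:\mathbb F_2^{2d+2}\to\mathbb F_2^{2d+2}/\{0,w\}$ with $0\ne w\in W_0$. There your check of (a) and (b) is right: a nonzero $\delta$ has two nonzero lifts $c$ and $c+w$, and each lies in exactly one $W_k$. This is actually a cleaner proof of (b) than the paper's trace computation.

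Second, the tiling is left unexecuted. Choosing translates with $a_i+a_j\notin U_i+U_j$ for all $\binom N2$ pairs is a system of coupled conditions, and your sketch never shows it is solvable. The phrase ``representatives determined by $W_0$'' together with a counting remark is not an argument.

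\paragraph{The missing idea.} You should not choose the translates at all. Instead, cut the spread with an affine hyperplane that misses the origin.

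\begin{enumerate}
\item With $w$ as above, the directions dual to the $\pi(W_i)$ are $U_i=W_i^\perp\cap w^\perp\subset w^\perp$.
\item The family $\{W_k^\perp\}$ is again a full spread, since $W_k^\perp\cap W_l^\perp=(W_k+W_l)^\perp=\{0\}$ and a count shows these spaces cover every nonzero vector. Moreover $W_0^\perp\subseteq w^\perp$.
\item Put $A=\{x:\langle w,x\rangle=1\}$. For $1\le i\le N$ the sets $W_i^\perp\cap A$ are cosets of $U_i$.
\item They are pairwise disjoint, because $W_i^\perp\cap W_j^\perp=\{0\}$ and $0\notin A$.
\item They cover $A$. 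Every $x\in A$ is nonzero, so it lies in exactly one $W_k^\perp$, and $k\ne0$ because $W_0^\perp\subseteq w^\perp$.
\item Translate $A$ onto $w^\perp\cong\mathbb F_2^{2d+1}$. The characters of $w^\perp$ are indexed by $\mathbb F_2^{2d+2}/\{0,w\}$, and the annihilator of $U_i$ there is exactly $(W_i+\{0,w\})/\{0,w\}=\pi(W_i)$.
\end{enumerate}

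This gives precisely the configuration your (a)/(b) argument handles. The translates then automatically form a coset $x_0+W_0^\perp$, which is probably what your remark about $W_0$ was reaching for. This is the paper's Lemma~\ref{spread_lemma_2} with $u=w$ and $V_\infty=W_0^\perp$. Without this step, the proposal reduces the theorem to the construction but does not prove it.
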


This is a substantial improvement in the exponent in the \cite{HHL+} bound, from $5/6$ to $1/2$.
Correspondingly, under the <<lazy>> greedy framework, i.e. assuming that the same maximum-folding estimate is inherited by all restrictions, the standard analysis would force a branch of length $\Omega(k^{1/2})$. Since the general upper bound $O(k^{1/2})$ is already known, such an analysis cannot improve the state of the art.

We also discuss in detail that the inheritance assumption is not justified in general, and we provide examples (the classical address function) where it is clearly violated. Thus, it is precisely the <<lazy>> variant of the greedy approach that is refuted; the possibility of constructing an adaptive greedy algorithm that analyses the structure of restrictions remains open.

The paper is organised as follows. In Section~2 we discuss generalisations of the address function from \cite{KLT} and \cite{HHL+}. In Section~3 we use APLPS-partitions to prove the main strengthening theorem. In Section~4 we discuss the consequences for greedy approaches to PDT depth and their limitations.

\section{Generalizations of the address function}

The address function is a classical object in Boolean function theory, complexity theory, and cryptography. For a positive integer $m$, it is defined on $n=m+2^m$ variables:
$$
\operatorname{Add}_m(x_1,\dots,x_m,y_0,\dots,y_{2^m-1}) = (-1)^{y_{\xi(x_1,\dots,x_m)}},
$$
where $\xi(x)=\sum\limits_{i=1}^m x_i 2^{m-i}$ interprets the binary vector as an integer. This function has small decision tree depth ($m+1$), but its Fourier/Walsh--Hadamard spectrum has size $|\operatorname{supp}\widehat{\operatorname{Add}_m}|=2^{2m}$, which makes it an important example in the study of the log-rank conjecture for XOR functions. Moreover, the address function is often used in coding theory and in constructions of plateaued, correlation-immune, and other cryptographically important functions.

\subsection{Generalization in \cite{KLT}}

In \cite{KLT}, the following construction of a spectrum support is proposed. Let $m_1,m_2$ be positive integers, and for each $i=1,\dots,m_2$ let a linear subspace $L_i\subseteq \mathbb F_2^{m_1}$ and a shift vector $a^i\in\mathbb F_2^{m_1}$ be given. Consider the set
$$
S=\{(v,u)\in \mathbb F_2^{m_1+m_2} : u=e_i,\; v\in L_i+a^i,\; i=1,\dots,m_2\},
$$
where $e_i$ is the $i$-th standard basis vector in $\mathbb F_2^{m_2}$. This set is declared to be the \emph{spectrum support} of a hypothetical Boolean function on $m_1+m_2$ variables; the existence of such a function is not guaranteed.

In \cite{KLT}, the Walsh--Hadamard coefficients are denoted by $W_f(u)$, and the Fourier coefficients of auxiliary functions by $F_{\widehat f_i}(x)$. For a given $S$, the functions $\widehat f_i(v)=W_f(v,e_i)$ on $\mathbb F_2^{m_1}$ are defined, and their Fourier transform is
$$
F_{\widehat f_i}(x)=\sum_{v\in\mathbb F_2^{m_1}} \widehat f_i(v)(-1)^{\langle x,v\rangle}.
$$
Then the following characterization holds.

\begin{proposition}[{\cite[Proposition~5]{KLT}}]\label{prop:KLT}
The set of coefficients $\{F_{\widehat f_i}(x)\}$ defines a Boolean function on $\mathbb F_2^{m_1+m_2}$ with spectrum support contained in
$S^*=\{(v,e_i)\,:\, i=1,\dots,m_2,\ v\in \mathbb{F}_2^{m_1}\}$
if and only if for each $x\in\mathbb F_2^{m_1}$ exactly one of the coefficients $F_{\widehat f_i}(x)$ equals $\pm 2^{m_1+m_2}$, and all the others are zero.
\end{proposition}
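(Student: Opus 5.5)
The plan is to reduce the statement to a fibrewise problem: for each fixed $x\in\mathbb F_2^{m_1}$, look at the function of the remaining variables $y\in\mathbb F_2^{m_2}$. The point is that the inverse Walsh transform, restricted to $S^*$, becomes a linear combination of the characters $(-1)^{y_i}$ only.

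\textbf{Step 1 (inverse transform).} Write the input as $(x,y)$ with $x\in\mathbb F_2^{m_1}$ and $y\in\mathbb F_2^{m_2}$, and set $n=m_1+m_2$. By Fourier inversion, a collection of coefficients $W(v,u)$ supported in $S^*$ is the Walsh spectrum of a Boolean $f$ if and only if
$$
2^{-n}\sum_{v,u}W(v,u)(-1)^{\langle x,v\rangle+\langle y,u\rangle}\in\{\pm1\}\quad\text{for all }(x,y).
$$
Because $u$ ranges only over $e_1,\dots,e_{m_2}$, the inner sum over $v$ is exactly $F_{\widehat f_i}(x)$, with $\widehat f_i(v)=W(v,e_i)$. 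Hence the left-hand side equals
$$
g_x(y)=\sum_{i=1}^{m_2}a_i(x)(-1)^{y_i},\qquad a_i(x)=2^{-n}F_{\widehat f_i}(x).
$$
The functions $\widehat f_i$ and $F_{\widehat f_i}$ determine each other bijectively, so prescribing the $F_{\widehat f_i}$ is the same as prescribing the spectrum. The question therefore becomes: for which real vectors $(a_1,\dots,a_{m_2})$ is $g_x$ a $\pm1$-valued function on $\mathbb F_2^{m_2}$?

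\textbf{Step 2 (the fibre problem).} The \emph{sufficiency} direction is immediate. If exactly one $a_i=\pm1$ and all others vanish, then $g_x(y)=\pm(-1)^{y_i}$, which is Boolean.

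For \emph{necessity}, suppose $g_x$ takes values in $\{\pm1\}$.
\begin{itemize}
\item Compare $y$ with $y+e_j$. Only the $j$-th term changes, so $g_x(y)-g_x(y+e_j)=\pm2a_j$. This difference of two signs lies in $\{0,\pm2\}$, hence $a_j\in\{0,\pm1\}$ for every $j$.
\item Parseval on $\mathbb F_2^{m_2}$ gives $\sum_j a_j^2=\mathbb E_y[g_x(y)^2]=1$, because the characters $(-1)^{y_j}$ are orthonormal.
\end{itemize}
Together these force exactly one $a_j=\pm1$ and the rest zero. Translating back through $a_i=2^{-n}F_{\widehat f_i}(x)$, exactly one $F_{\widehat f_i}(x)$ equals $\pm2^{m_1+m_2}$ and the others are zero. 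This holds for every $x$, which gives the proposition.

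\textbf{Main obstacle.} The only genuine content is the fibre fact in Step 2: a Boolean function whose Fourier support lies on the weight-one characters, with no constant term, must be a signed dictator. The flip-and-Parseval argument above handles this directly. Alternatively one could invoke Titsworth's conditions, but the direct argument avoids that. The remaining care is bookkeeping of normalisations, namely the $2^{m_1}$ from the inner transform and the $2^{-n}$ from inversion, so that the threshold comes out as exactly $\pm2^{m_1+m_2}$.
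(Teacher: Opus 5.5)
Your proof is correct and complete. The paper gives no proof of Proposition~\ref{prop:KLT}: it is quoted from \cite{KLT} and used only as background for Corollary~\ref{cor:KLT} and the APLPS construction. So there is no in-paper argument to compare with, and your fibrewise reduction stands as a self-contained verification.

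It may help to state the underlying identity explicitly. For any Boolean $f$ on $\mathbb F_2^{m_1+m_2}$,
$F_{\widehat f_i}(x)=2^{m_1}\sum_{y\in\mathbb F_2^{m_2}}(-1)^{f(x,y)+y_i}$.
That is, $F_{\widehat f_i}(x)$ is $2^{m_1}$ times the Walsh coefficient at $e_i$ of the fibre $y\mapsto f(x,y)$. This shows directly that $\pm2^{m_1+m_2}$ is the extremal value, and it confirms your normalisation bookkeeping.

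Your flip-and-Parseval lemma is then the entire content: a $\pm1$-valued function on $\mathbb F_2^{m_2}$ whose Fourier support lies in $\{e_1,\dots,e_{m_2}\}$ is a signed dictator. This is more elementary than going through the Titsworth conditions mentioned in the paper's introduction.

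In the sufficiency direction, the $\pm1$-valued function you build has exactly the prescribed coefficients, hence support in $S^*$. This follows from uniqueness of the transform, which your bijection remark already covers.
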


In particular, when the natural counting condition
$$
\sum\limits_{i=1}^{m_2} 2^{m_1-\dim L_i}=2^{m_1}
$$
holds, one obtains the following consequence.

\begin{corollary}[{\cite[Corollary~4]{KLT}}]\label{cor:KLT}
If a Boolean function with spectrum support $S$ exists and
\begin{equation}\label{eq:tag_1}
\sum_{i=1}^{m_2} 2^{m_1-\dim L_i}=2^{m_1},
\end{equation}
then for each $i$, the number of nonzero coefficients $F_{\widehat f_i}(x)$ is exactly $2^{\dim L_i^\perp}=2^{m_1-\dim L_i}$, where
$$
L_i^\perp = \{ u \in \mathbb{F}_2^{m_1} \mid \langle u, v\rangle = 0 \text{ for all } v\in L_i\},\quad
\langle u,v\rangle = \left(\sum_{j=1}^{m_1} u_j v_j\right)\bmod 2.
$$
\end{corollary}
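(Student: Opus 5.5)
The plan is to read everything off Proposition~\ref{prop:KLT} combined with the counting condition~\eqref{eq:tag_1}.

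\textbf{Step 1: block form of the spectrum.} Assume a Boolean function $f$ with spectrum support $S$ exists. For each $i$ the function $\widehat f_i(v)=W_f(v,e_i)$ is supported on the coset $L_i+a^i$. Its transform therefore has the form
$$
F_{\widehat f_i}(x)=(-1)^{\langle x,a^i\rangle}\sum_{w\in L_i}\widehat f_i(w+a^i)(-1)^{\langle x,w\rangle}.
$$
So, up to a sign depending on $x$, $F_{\widehat f_i}$ is the transform of a function on $L_i$. Write $N_i=\{x\in\mathbb F_2^{m_1} : F_{\widehat f_i}(x)\ne0\}$.

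\textbf{Step 2: the sets $N_i$ partition the space.} By Proposition~\ref{prop:KLT}, for every $x$ exactly one $F_{\widehat f_i}(x)$ is nonzero. Hence the sets $N_1,\dots,N_{m_2}$ form a partition of $\mathbb F_2^{m_1}$, and
$$
\sum_i |N_i|=2^{m_1}.
$$

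\textbf{Step 3: a lower bound $|N_i|\ge 2^{m_1-\dim L_i}$.} This is the crux, and I would get it from Parseval. The values of $F_{\widehat f_i}$ are only $0$ or $\pm 2^{m_1+m_2}$. Parseval gives
$$
\sum_x F_{\widehat f_i}(x)^2 = 2^{m_1}\sum_{v}\widehat f_i(v)^2,
$$
so $|N_i|$ is controlled by the $\ell_2$ mass of the $i$-th block. That alone does not yield the bound, so I would combine it with the structure of the transform. Inverting the transform gives
$$
\widehat f_i(v)=2^{-m_1}\sum_{x\in N_i}F_{\widehat f_i}(x)(-1)^{\langle x,v\rangle}.
$$
Since $\widehat f_i$ is not identically zero (the block $i$ appears in $S$), we have $N_i\neq\emptyset$. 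Take any $x_0\in N_i$. The function $x\mapsto F_{\widehat f_i}(x)$ is, up to the character $(-1)^{\langle x,a^i\rangle}$, constant on nothing in particular. However, $|F_{\widehat f_i}|$ is invariant under shifts by $L_i^\perp$ \emph{only} if $\widehat f_i$ is a single point, so invariance is the wrong tool.

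Instead I would use uncertainty. The transform of a nonzero function supported on a coset of $L_i$ has support of size at least $2^{m_1}/|L_i|=2^{m_1-\dim L_i}$. This follows from the standard uncertainty principle $|\operatorname{supp} g|\cdot|\operatorname{supp}\widehat g|\ge 2^{m_1}$, applied with $|\operatorname{supp}\widehat f_i|\le|L_i|$. Hence $|N_i|\ge 2^{m_1-\dim L_i}$.

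\textbf{Step 4: conclude by counting.} Summing the bound from Step 3 and using Step 2 together with~\eqref{eq:tag_1},
$$
2^{m_1}=\sum_i|N_i|\ \ge\ \sum_i 2^{m_1-\dim L_i}=2^{m_1}.
$$
Equality must therefore hold term by term, so $|N_i|=2^{m_1-\dim L_i}=2^{\dim L_i^\perp}$ for every $i$, as claimed.

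The only nontrivial ingredient is the uncertainty inequality in Step 3, so I would include its two-line proof. Let $g$ be a nonzero function on $\mathbb F_2^{m_1}$. Then
$$
\|g\|_\infty\le 2^{-m_1}\|\widehat g\|_1\le 2^{-m_1}|\operatorname{supp}\widehat g|^{1/2}\|\widehat g\|_2 .
$$
By Parseval, $\|\widehat g\|_2^2=2^{m_1}\|g\|_2^2\le 2^{m_1}|\operatorname{supp} g|\,\|g\|_\infty^2$. Substituting and cancelling $\|g\|_\infty$ gives $|\operatorname{supp} g|\cdot|\operatorname{supp}\widehat g|\ge 2^{m_1}$.
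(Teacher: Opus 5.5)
Your proof is correct as written. The partition from Proposition~\ref{prop:KLT}, the bound $|N_i|\ge 2^{m_1-\dim L_i}$ from the uncertainty inequality (your derivation of it is sound), and the termwise equality forced by \eqref{eq:tag_1} together give the claim.

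The paper does not prove this corollary; it only quotes it from \cite{KLT}. Its remark right afterwards, that the nonzero $F_{\widehat f_i}(x)$ must be concentrated on a shift of $L_i^\perp$, implicitly relies on a more elementary lower bound. That is exactly the one you dismissed in Step~3 with a false claim, though the slip is harmless because you never use it. The invariance holds for \emph{every} $\widehat f_i$ supported on $L_i+a^i$, not only for a single point. For $z\in L_i^\perp$ and $v=w+a^i$ with $w\in L_i$, one has $\langle x+z,v\rangle=\langle x,v\rangle+\langle z,a^i\rangle$, hence $F_{\widehat f_i}(x+z)=(-1)^{\langle z,a^i\rangle}F_{\widehat f_i}(x)$. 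Equivalently, in your Step~1 formula the inner sum over $L_i$ depends only on $x$ modulo $L_i^\perp$, so it is not true that it is ``constant on nothing in particular''. Hence $N_i$ is a nonempty union of cosets of $L_i^\perp$. This gives $|N_i|\ge 2^{m_1-\dim L_i}$ in one line. Once the counting forces equality, it also shows that $N_i$ is a single coset $L_i^\perp+b_i$, which is the structural fact that produces the affine partitions of $\mathbb F_2^{m_1}$ the paper uses next.

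Your uncertainty-principle route trades this structure for generality. It uses only $|\operatorname{supp}\widehat f_i|\le 2^{\dim L_i}$ and not the coset shape of the blocks, so the same counting works for arbitrary block supports $B_i$ with $\sum_i 1/|B_i|=1$. What it does not give directly is the coset structure of $N_i$. For that you would need either the characterization of equality cases in the uncertainty principle or the periodicity argument above.
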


Equation~\eqref{eq:tag_1} is necessary for the total number of elements in the orthogonal (dual) subspaces $L_i^\perp$ to be exactly $2^{m_1}$.

Condition \eqref{eq:tag_1} naturally leads to the problem of partitioning the space $\mathbb F_2^{m_1}$ into affine subspaces. Indeed, if for each $i$ the number of nonzero coefficients $F_{\widehat f_i}(x)$ is exactly $2^{m_1-\dim L_i}$, then these coefficients must be concentrated on some shift of the subspace $L_i^\perp$.

Thus, in this situation, the nonzero values of the functions $F_{\widehat f_i}$ naturally form a partition of $\mathbb F_2^{m_1}$ by affine subspaces of the form $L_i^\perp+b$.
In \cite{KLT}, the notation
$$
N(r_1,L_{i_1}^{\perp};\dots;r_s,L_{i_s}^{\perp})
$$
is introduced for the number of ways to partition $\mathbb F_2^{m_1}$ into affine subspaces among which exactly $r_j$ have the form $L_{i_j}^{\perp}+b$. (The original text uses the term <<linear subspaces>>, but from the form $L_{i_j}^{\perp}+b$ it is clear that affine subspaces are meant.)

We note that \cite{KLT} has also been cited in the complexity-theoretic literature; for instance, it appears in \cite{Sanyal19}, which studies Boolean functions with sparse Fourier support in connection with the log-rank conjecture.

\subsection{Generalization in \cite{HHL+}}

In \cite{HHL+} (Example 3.3), the following construction is considered. Let $k\ge 3$, $m_1=7k$, $m_2=2^k$. Let $\{A_i\}_{i=1}^{m_2}$ be a family of pairwise disjoint affine subspaces in $\mathbb F_2^{7k}$ of the form $A_i=V_i+a_i$, where $V_i$ are linear subspaces in $\mathbb F_2^{7k}$ with $\dim V_i=2k$, and $a_i\in\mathbb F_2^{7k}$ are shift vectors. Define the function
$$
f(x,y)=1+\sum_{i=1}^{2^k} \mathbf 1_{A_i}(x)\bigl((-1)^{y_i}-1\bigr),\qquad x\in\mathbb F_2^{7k},\; y\in\mathbb F_2^{2^k}.
$$
This definition is equivalent to the following: $f(x,y)=(-1)^{y_i}$ if $x\in A_i$ for some $i$, and $f(x,y)=1$ otherwise (i.e., for $x\notin \bigcup\limits_i A_i$). This representation is convenient for computing the spectrum.

The numerical choice ($d=2k$, $m=7k$) is tailored to the estimates in \cite{HHL+}. Similar constructions can be considered with other parameters $(d,m)$, provided suitable disjointness and intersection conditions are satisfied.

Using the standard formula for the Fourier transform of the indicator of an affine subspace (see \cite[Fact~2.1]{HHL+}), one obtains that the spectrum support $\mathcal S=\operatorname{supp}\widehat f$ satisfies the inclusions
\begin{equation}\label{eq:tag_2_17}
\bigcup_{i=1}^{2^k} \bigl(V_i^\perp\times\{e_i\}\bigr)
\subseteq
\mathcal S
\subseteq
\left(
\bigcup_{i=1}^{2^k} \bigl(V_i^\perp\times\{0\}\bigr)
\right)
\cup
\left(
\bigcup_{i=1}^{2^k} \bigl(V_i^\perp\times\{e_i\}\bigr)
\right).
\end{equation}
where $V_i^\perp$ is the orthogonal (or dual) subspace to $V_i$ in $\mathbb F_2^{7k}$ (of dimension $5k$), and $e_i$ are the standard basis vectors in $\mathbb F_2^{2^k}$.

In the language of the generalization from \cite{KLT}, this can be rewritten in a <<similar style>> by separating the right-hand part $u$:
$$
\Bigl\{(v,e_i): i=1,\dots,2^k,\; v\in V_i^\perp\Bigr\}
\subseteq\mathcal S\subseteq
\Bigl\{(v,0): v\in \bigcup_{i=1}^{2^k} V_i^\perp\Bigr\}
\;\cup\;
\Bigl\{(v,e_i): i=1,\dots,2^k,\; v\in V_i^\perp\Bigr\}.
$$
Here, the possible $u=0$ part is contained in a union of many distinct subspaces $V_i^\perp$, whereas in the definition from \cite{KLT}, each $u$ corresponds to exactly one affine subspace $L_u+a_u$. Therefore, the construction of \cite{HHL+} is \emph{not} a special case of the generalization from \cite{KLT}. Note also that in \cite{HHL+}, the linear subspaces $V_i$ and their dual subspaces play the opposite role to that in \cite{KLT}: in the latter, $L_i$ are subspaces in the left part, and the support contains $L_i^\perp$; in \cite{HHL+}, the support directly contains $V_i^\perp$, while $V_i$ are the directions of the original affine sets.

The possible appearance of the $u=0$ layer in (\ref{eq:tag_2_17}) is caused by the <<background>> term
$$
1-\sum\limits_{i=1}^{2^k} \mathbf 1_{A_i}(x),
$$
which is the indicator of the complement of $\bigcup\limits_{i=1}^{2^k} A_i$. This term is necessary in the construction in \cite{HHL+} because the affine subspaces $A_i$ do not form a partition of $\mathbb F_2^{7k}$; without it, the expression would vanish outside the union and hence would not define a $\{\pm1\}$-valued Boolean function.

Thus, unlike the \cite{KLT}-type situation where the relevant affine pieces form a partition, the family in \cite{HHL+} requires an additional <<background>> term to make the function Boolean-valued on the whole space.

Another generalization of the address function for similar purposes is introduced in \cite[proof of Theorem~3.8, part 1]{MS24}. There, too, a Boolean function is constructed with certain (but different) properties of folding in the spectrum support.

\subsection{Connection to the subsequent exposition}

In the next section, we will propose a construction based on an \emph{APLPS-partition} of the space $\mathbb F_2^m$ into affine subspaces, which avoids background components. Thereby we obtain a stronger bound on the maximum intersection of spectrum supports ($O(|\mathcal S|^{1/2})$ instead of $O(|\mathcal S|^{5/6})$), which gives a better refutation of the greedy approach.

Our construction will be
\begin{itemize}
\item
a special case of the \cite{KLT}-type support construction: for each $u=e_i$, the corresponding left part is a single linear subspace $V_i^\perp$, and there is no additional $u=0$ layer;
\item
a variant with a stronger folding estimate than in the construction in \cite{HHL+}.
\end{itemize}

\section{Strengthening the result of H.~Hatami et al. from \cite{HHL+}}

In this section, we improve the estimate of Theorem 1.3 from \cite{HHL+}, replacing the exponent $5/6$ by $1/2$. To this end, we use a special partition of the space into affine subspaces (APLPS-partition), which allows us to construct a function without <<background>> spectral components and at the same time control the size of the largest (boldest) folding in the spectrum support.

We recall the main results of \cite{HHL+} that we will strengthen.

\begin{lemma}\label{lemma_from_HHL+}\cite[Lemma~3.4]{HHL+}
For $k\ge 3$ there exists a family of affine subspaces $A_i=V_i+a_i\subset \mathbb F_2^{7k}$, where $V_i$ are linear subspaces in $\mathbb F_2^{7k}$, $a_i\in\mathbb F_2^{7k}$ are shift vectors, $i=1,\dots,2^k$, such that:
\begin{enumerate}
\item $\dim V_i=2k$;
\item $A_i\cap A_j=\varnothing$ for $i\ne j$;
\item $V_i\cap V_j=\{0\}$ for $i\ne j$;
\item for every nonzero $v\in\mathbb F_2^{7k}$, $|\{i: v\in V_i^\perp\}|\le 7$.
\end{enumerate}
\end{lemma}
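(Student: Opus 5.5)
This is cited from \cite{HHL+}; I would reprove it by a probabilistic construction with a deletion step. Set $m=7k$ and $N=2^k$, and choose $M=2N$ (or some constant multiple of $N$) candidate pairs $(V_i,a_i)$. Each $V_i$ is a uniformly random $2k$-dimensional subspace of $\mathbb F_2^{m}$ and each $a_i$ is a uniform shift. The plan is to show that the expected number of "bad events" among the candidates is at most a small fraction of $M$. Deleting one index from each bad event then leaves at least $2^k$ good indices, and properties 1--4 hold on those.

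\textbf{Properties 2 and 3.} Fix a pair $i\ne j$.

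First, $V_i\cap V_j\ne\{0\}$ requires some nonzero $v\in V_i$ to lie in $V_j$. By a union bound over the $2^{2k}$ vectors of $V_i$, this has probability at most $2^{2k}\cdot 2^{2k-7k}=2^{-3k}$.

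Second, $A_i\cap A_j\ne\varnothing$ means $a_i+a_j\in V_i+V_j$, a set of size at most $2^{4k}$. Since $a_i+a_j$ is uniform, this has probability at most $2^{4k-7k}=2^{-3k}$.

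There are about $M^2=O(2^{2k})$ pairs, so the expected number of bad pairs is $O(2^{-k})$, which is negligible.

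\textbf{Property 4.} This is the main obstacle. Note that $V_i^\perp$ is a uniformly random $5k$-dimensional subspace. For a fixed nonzero $v$, $\Pr[v\in V_i^\perp]=(2^{5k}-1)/(2^{7k}-1)\approx 2^{-2k}$.

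For a fixed $8$-set of indices, independence gives probability about $2^{-16k}$ that all eight $V_i^\perp$ contain $v$. A union bound over $v$ and over $8$-sets gives
$$2^{7k}\binom{M}{8}2^{-16k}=O(2^{7k+8k-16k})=O(2^{-k}).$$
So with high probability, no nonzero $v$ lies in more than $7$ of the $V_i^\perp$. This is exactly why the constants $7k$ and $2k$ fit: with $8$ indices one needs $8\cdot 2k>7k+k$. Such a bad event would be removed by deleting an index, but its expected count is already tiny.

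\textbf{Conclusion.} Take $M=2^k$ directly, since all expected bad counts are $o(1)$. Then with positive probability there is no bad event at all, and no deletion is needed. This yields the family satisfying properties 1--4.

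The only care needed is the union bound in property 4. The exponent is tight enough that one must check $\binom{M}{8}\le M^8=2^{8k}$ and that the $-1$ terms in the probability do not matter; both are routine.
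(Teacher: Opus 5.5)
Your argument is correct, and it follows essentially the same route as the paper. The paper gives no proof of its own: it cites \cite{HHL+} and notes only that the construction there is probabilistic and nonconstructive, and your random $2k$-dimensional subspaces and uniform shifts with a union bound (including $2^{7k}\binom{2^k}{8}2^{-16k}\le 2^{-k}$ for property~4) is exactly such an argument. The lemma is stated for each fixed $k\ge 3$, so drop the unnecessary deletion step and replace ``$o(1)$'' by the explicit total $\binom{2^k}{2}\cdot 2\cdot 2^{-3k}+2^{7k}\binom{2^k}{8}2^{-16k}<2^{1-k}\le 1/4$, which your own estimates already give.
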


Based on this family, in \cite{HHL+} the function
$$
f(x,y)=1+\sum_{i=1}^{2^k} \mathbf 1_{A_i}(x)\bigl((-1)^{y_i}-1\bigr),\qquad x\in\mathbb F_2^{7k},\; y\in\mathbb F_2^{2^k}
$$
is defined, and it is proved that its spectrum support $\mathcal S$ has size at least $2^{6k}$, and for any distinct $\gamma_1,\gamma_2$,
$$
|(\mathcal S+\gamma_1)\cap(\mathcal S+\gamma_2)| \le 2^{5k+4}=O(|\mathcal S|^{5/6}).
$$
This leads to Theorem 1.3:

\begin{theorem}\cite[Theorem~1.3]{HHL+}
For infinitely many $n$ there exists a Boolean function $f:\mathbb F_2^n\to\{-1,1\}$ such that for $\mathcal S=\operatorname{supp}\widehat f$ and any distinct $\gamma_1,\gamma_2$,
$$
|(\mathcal S+\gamma_1)\cap(\mathcal S+\gamma_2)| = O(|\mathcal S|^{5/6}).
$$
\end{theorem}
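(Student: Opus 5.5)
The plan is to reproduce the argument of \cite{HHL+}, taking Lemma~\ref{lemma_from_HHL+} as given. I will work with the function
$$
f(x,y)=1-\sum_i\mathbf 1_{A_i}(x)+\sum_i\mathbf 1_{A_i}(x)(-1)^{y_i}.
$$
This is $\pm1$-valued by property~2. The first step is to compute the spectrum using the Fourier transform of an affine-subspace indicator (\cite[Fact~2.1]{HHL+}).

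The term $\mathbf 1_{A_i}(x)(-1)^{y_i}$ has Fourier support exactly $V_i^\perp\times\{e_i\}$. Every coefficient on this set has absolute value $2^{2k-7k}$, and $\dim V_i^\perp=5k$. These $2^k$ pieces live in different $y$-layers, so they cannot cancel one another. Hence
$$
|\mathcal S|\ge 2^k\cdot 2^{5k}=2^{6k}.
$$
The background term $1-\sum_i\mathbf 1_{A_i}$ contributes only to the layer $y$-part $=0$, and there its support lies in $\bigcup_i V_i^\perp\times\{0\}$. This gives the inclusions \eqref{eq:tag_2_17}.

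Next I fix $\gamma=(g,w)\neq 0$ and bound the number $N_\gamma$ of $\alpha\in\mathcal S$ with $\alpha+\gamma\in\mathcal S$; this number equals the intersection size. I will only use the upper inclusion in \eqref{eq:tag_2_17}. The key fact comes from property~3: $V_i\cap V_j=\{0\}$ gives $\dim(V_i+V_j)=4k$, so
$$
\dim\bigl(V_i^\perp\cap V_j^\perp\bigr)=\dim(V_i+V_j)^\perp=3k\qquad(i\ne j).
$$
Consequently, for $i\neq j$ the set $\{v\in V_i^\perp: v+g\in V_j^\perp\}$ is empty or a coset of this intersection, so it has size at most $2^{3k}$. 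I then split into cases according to $w$, since both $\alpha$ and $\alpha+\gamma$ must lie in one of the layers $0$ or $e_i$.

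\begin{itemize}
\item \emph{Case $w=0$, layer $e_i$.} Here $g\neq0$, and both points in $V_i^\perp\times\{e_i\}$ forces $g\in V_i^\perp$. By property~4 this happens for at most $7$ indices $i$, which gives at most $7\cdot 2^{5k}$ points.
\item \emph{Case $w=0$, layer $0$.} I sum over ordered pairs $(i,j)$. The diagonal $i=j$ again needs $g\in V_i^\perp$, so it contributes at most $7\cdot2^{5k}$. The $2^{2k}$ off-diagonal pairs contribute at most $2^{2k}\cdot2^{3k}=2^{5k}$.
\item \emph{Case $w=e_i$.} One endpoint is in $V_i^\perp\times\{e_i\}$ and the other in layer $0$, inside some $V_j^\perp$. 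The term $j=i$ contributes at most $2^{5k}$, and the terms $j\ne i$ contribute at most $2^k\cdot2^{3k}$. Doubling for the two orientations gives $O(2^{5k})$.
\item \emph{Case $w=e_i+e_j$.} Here $\alpha\in V_i^\perp$ and $\alpha+g\in V_j^\perp$, so there are at most $2\cdot2^{3k}$ points.
\item \emph{All other $w$.} There are no such pairs.
\end{itemize}

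Summing the cases gives $N_\gamma\le C\cdot 2^{5k}$ for an absolute constant $C$, of the order of the stated $2^{4}$. Combined with $|\mathcal S|\ge 2^{6k}$, this yields $O(|\mathcal S|^{5/6})$.

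The step I expect to be the main obstacle is the $w=0$ case inside the background layer. There the $u=0$ support is only known up to the union $\bigcup_i V_i^\perp$, so the count cannot use exact supports and must pass through the pairwise intersections $V_i^\perp\cap V_j^\perp$. This is where property~3 (the dimension $3k$) and property~4 (multiplicity at most $7$) are both essential, and the numerical choice $m_1=7k$, $\dim V_i=2k$ is exactly what makes the $2^{2k}\cdot2^{3k}$ count match $2^{5k}$.
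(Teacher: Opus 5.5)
Your proposal is correct and matches the argument the paper attributes to \cite{HHL+}: take the family from Lemma~\ref{lemma_from_HHL+}, use the inclusions \eqref{eq:tag_2_17} to get $|\mathcal S|\ge 2^{6k}$, and bound $|\mathcal S\cap(\mathcal S+\gamma)|$ by a case split on the $y$-part of $\gamma$, using $\dim(V_i^\perp\cap V_j^\perp)=3k$ and the multiplicity bound $7$, which gives at most $15\cdot 2^{5k}\le 2^{5k+4}$ in the worst case $w=0$. One small remark on wording: for a fixed $\gamma$ the cases on $w$ are mutually exclusive, so you need their maximum rather than their sum, although the sum is also a valid upper bound.
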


We will show that one can construct a function with the estimate $O(|\mathcal S|^{1/2})$, which is significantly stronger.

Note also that the existence of a system of disjoint subspaces with the parameters specified in Lemma~\ref{lemma_from_HHL+} was established in \cite{HHL+} by a probabilistic, i.e. completely nonconstructive method, whereas all our constructions will be explicit.

For this, we first introduce the notion of APLPS-partition.

\begin{definition}
An $(m,d)$-APLPS-partition (\textit{affine partition, linear partial spread}) of the space $\mathbb F_q^m$ is a family $\{A_i\}_{i\in I}$ of affine subspaces of dimension $d$ satisfying:
\begin{itemize}
\item $A_i=V_i+a_i$, where $V_i$ is a linear subspace of dimension $d$, $a_i\in\mathbb F_q^m$;
\item the $A_i$ are pairwise disjoint and cover $\mathbb F_q^m$ (a partition);
\item $V_i\cap V_j=\{0\}$ for $i\ne j$.
\end{itemize}
\end{definition}

We shall need the following facts.

Lemma~\ref{spread_lemma_1} is well known. Its proof for $q=2$ is contained in \cite{Dil68}. We give Lemma~\ref{spread_lemma_1} with proof not only for completeness, but also because the construction from the proof of Lemma~\ref{spread_lemma_1} will be used in the proof of Lemma~\ref{spread_lemma_3}.

\begin{lemma}\label{spread_lemma_1}{\bf (full linear spread)}
For any even $N$ and any finite field $\mathbb{F}_q$, there exists a full linear spread in $\mathbb{F}_q^N$, i.e. a family of $q^{N/2}+1$ linear subspaces of dimension $N/2$, pairwise intersecting in $\{0\}$, such that every nonzero vector belongs to exactly one of them.
\end{lemma}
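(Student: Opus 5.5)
The plan is the standard Desarguesian spread construction. Put $M=N/2$ and identify $\mathbb F_q^N$ with $\mathbb F_{q^M}\times\mathbb F_{q^M}$. This is possible because $\mathbb F_{q^M}$ is an $M$-dimensional vector space over $\mathbb F_q$, so after fixing a basis we get an $\mathbb F_q$-linear isomorphism $\mathbb F_q^N\cong \mathbb F_{q^M}^2$. Define the family
$$
U_a=\{(x,ax)\,:\,x\in\mathbb F_{q^M}\},\quad a\in\mathbb F_{q^M},\qquad U_\infty=\{(0,y)\,:\,y\in\mathbb F_{q^M}\}.
$$
There are $q^{M}+1=q^{N/2}+1$ of these sets. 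Each is an $\mathbb F_q$-linear subspace: it is even an $\mathbb F_{q^M}$-subspace, and multiplication by $a$ is $\mathbb F_q$-linear. Each has dimension $M=N/2$ over $\mathbb F_q$, since it is the image of an injective linear map from $\mathbb F_{q^M}$.

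Next I would check that the subspaces pairwise intersect trivially. If $(x,ax)=(x,bx)$ with $a\neq b$, then $(a-b)x=0$, so $x=0$, because $\mathbb F_{q^M}$ is a field and has no zero divisors. If $(x,ax)\in U_\infty$, then $x=0$, which forces the vector to be $0$. This field property is the only non-formal ingredient in the whole argument.

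Then I would show the subspaces cover every nonzero vector. Take $(x,y)\neq 0$. If $x=0$, then $(x,y)\in U_\infty$. If $x\ne 0$, then $(x,y)\in U_a$ with $a=y x^{-1}$. Uniqueness follows from the trivial pairwise intersections.

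As a consistency check, count the nonzero vectors: $(q^{M}+1)(q^{M}-1)=q^N-1$, which is exactly the number of nonzero vectors in $\mathbb F_q^N$. So the count alone gives an alternative proof of the covering once disjointness is known.

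I expect no real obstacle here. The only point needing care is to state explicitly that the identification with $\mathbb F_{q^M}^2$ is $\mathbb F_q$-linear, so that all the properties transfer back to $\mathbb F_q^N$. I would also record the explicit form of the $U_a$, because this construction is reused later in the proof of Lemma~\ref{spread_lemma_3}.
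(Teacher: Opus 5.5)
Your proposal is correct and follows the paper's proof essentially verbatim: the same identification $\mathbb F_q^N\cong\mathbb F_{q^{N/2}}^2$, the same subspaces $\{(x,\alpha x)\}$ and $\{(0,y)\}$, trivial intersection from the field having no zero divisors, and covering via $\alpha=yx^{-1}$. Your counting check $(q^{M}+1)(q^{M}-1)=q^N-1$ and your remark that the identification must be $\mathbb F_q$-linear are harmless additions to that argument.
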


\begin{proof}
Put $k = N/2$. Identify $\mathbb{F}_q^k$ with the finite field $\mathbb{F}_{q^k}$ using a fixed basis. Represent $\mathbb{F}_q^N$ as the set of pairs $(x, y)$ with $x, y \in \mathbb{F}_{q^k}$. For each $\alpha \in \mathbb{F}_{q^k}$, define the linear subspaces:
$$
V_\alpha = \{(x, \alpha x) \mid x \in \mathbb{F}_{q^k}\},
$$
and add
$$
V_\infty = \{(0, y) \mid y \in \mathbb{F}_{q^k}\}.
$$
Each of these sets is an $\mathbb{F}_q$-linear subspace of dimension $k$ (componentwise addition). If $(x, y) \in V_\alpha \cap V_\beta$ for $\alpha \neq \beta$, then $x = 0$ and hence $y = 0$; similarly $V_\alpha \cap V_\infty = \{0\}$. Thus all subspaces are pairwise intersecting only at zero.

Any nonzero vector $(x, y) \in \mathbb{F}_q^N$ belongs to $V_\alpha$ with $\alpha = y x^{-1}$ (if $x \neq 0$), and belongs to $V_\infty$ if $x = 0$. Therefore the union of all subspaces covers all nonzero vectors, and we obtain a full linear spread.
\end{proof}

Full linear spreads are well known in cryptography, since the $\mathcal{PS}$ family ({\it partial spread} family) of bent functions is based on the partial spreads extracted from them \cite{Dil68,Car21,Mes16}.

The proof of Lemma~\ref{spread_lemma_2} uses the full spread from Lemma~\ref{spread_lemma_1} and a hyperplane section. A similar construction (corresponding to the case $m=2d+1$) is given in \cite[Proposition 7]{BFIK25}.

\begin{lemma}\label{spread_lemma_2}{\bf (existence of APLPS-partition)}
Let $m \ge 2d + 1$. Then there exists a partition of $\mathbb{F}_q^m$ into affine subspaces of dimension $d$ with linear subspaces pairwise intersecting in $\{0\}$ (an $(m,d)$-APLPS-partition).
\end{lemma}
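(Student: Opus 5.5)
The plan is to treat the extremal case $m=2d+1$ by an explicit hyperplane section of the full linear spread of Lemma~\ref{spread_lemma_1}, and then to reach larger $m$ by further affine sections of a larger spread. One thing to note at the outset: the naive product construction $\mathbb F_q^{2d+1}\times\mathbb F_q^{m-2d-1}$ does not work. Repeating a partition in every layer reuses each direction $V_i$ many times, which violates $V_i\cap V_j=\{0\}$.

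\emph{Base case $m=2d+1$.} Apply Lemma~\ref{spread_lemma_1} with $N=2d+2$. This gives $q^{d+1}+1$ subspaces $W_0,\dots,W_{q^{d+1}}$ of dimension $d+1$, pairwise meeting in $\{0\}$ and covering $\mathbb F_q^{2d+2}\setminus\{0\}$. Fix $h\ne0$ and the affine hyperplane $H=\{x:\langle h,x\rangle=1\}$, which does not contain $0$.
\begin{itemize}
\item Every point of $H$ is nonzero, so it lies in exactly one $W_j$. Hence the nonempty sets $W_j\cap H$ partition $H$.
\item If $W_j\not\subseteq h^\perp$, then $W_j\cap H$ is an affine subspace of dimension exactly $d$ with direction $V_j=W_j\cap h^\perp$.
\item If $W_j\subseteq h^\perp$, then $W_j\cap H=\varnothing$, so such $W_j$ contribute nothing.
\item Since $V_j\subseteq W_j$, the directions still satisfy $V_i\cap V_j=\{0\}$.
\end{itemize}
Finally, an affine bijection $H\to\mathbb F_q^{2d+1}$ (translation followed by a linear isomorphism $h^\perp\cong\mathbb F_q^{2d+1}$) carries affine $d$-flats to affine $d$-flats and preserves trivial intersection of directions. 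This yields a $(2d+1,d)$-APLPS-partition.

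\emph{General case $m=2d+c$, $c\ge1$.} I would iterate the same section idea. Put $M=m-d=d+c$ and start from the full spread of $\mathbb F_q^{2M}$, which is a $(2M,M)$-partition of the nonzero vectors with "excess" $2M-2M=0$. Then intersect successively with $c$ affine hyperplanes.

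One step goes as follows. Start from a partition of an affine space (or of the punctured space) into flats of dimension $e$ with pairwise trivially intersecting directions $D_j$, and cut by $\{\langle h,x\rangle=t\}$.
\begin{itemize}
\item Flats not parallel to $h^\perp$ become $(e-1)$-flats, with directions $D_j\cap h^\perp$, which remain pairwise trivial.
\item Flats parallel to $h^\perp$ either miss the cut or lie entirely in it, keeping dimension $e$.
\end{itemize}
So the pair $(m,d)$ passes to $(m-1,d-1)$, and the excess $m-2d$ grows by $1$ each step. After $c$ steps we land in dimension $2M-c=m$ with flats of dimension $M-c=d$, as required.

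The main obstacle is exactly the parallel flats: at each step we must choose $(h,t)$ so that no flat parallel to $h^\perp$ lies in the chosen level set. At the first step this is automatic. The only flat through the origin is the spread itself, each hyperplane contains exactly one spread element, and taking $t\ne0$ removes it. At later steps I would first try a counting argument. Count the pairs (direction $D_j$, hyperplane containing $D_j$), and compare with the number of admissible $(h,t)$. The aim is to show that some $h$ has no parallel direction at all, or else that the parallel flats occupy fewer than $q$ of the $q$ level sets of $h$, so some $t$ avoids them.

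If that count fails, the fallback is a reduction. Any surviving flat of dimension $e\ge 2d+1$ can be refined internally by the induction hypothesis (an $(e,d)$-APLPS-partition of it). Its new directions lie inside its old direction, so they remain trivially intersecting with all other directions. Then only parallel flats of dimension in $(d,2d]$ need to be excluded by the choice of $(h,t)$.
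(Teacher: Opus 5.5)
Your base case $m=2d+1$ is correct. It is exactly the paper's construction in that case, with the unique spread element inside $h^\perp$ playing the role of $V_\infty$.

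\textbf{The gap.} The general case $m>2d+1$ is not proved. You correctly identify parallel flats as the obstacle, but you never show that the cuts can avoid them, and neither of your proposed routes does.

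The incidence count cannot work, because good normals are rare. Consider the second cut with $q=2$. Let $W_0$ be the spread element removed by the first cut, so $h_1\in W_0^\perp$, and take any $h_2\notin W_0^\perp$. The $W_j^\perp$ also form a spread. Hence $h_2\in W_a^\perp$ and $h_1+h_2\in W_b^\perp$ for some $a,b\ne0$ with $a\ne b$. Then $\langle h_2,\cdot\rangle$ is identically $0$ on the flat $W_a\cap H_1$ and identically $1$ on the flat $W_b\cap H_1$. So for either choice of $t$, a whole $(M-1)$-flat survives uncut. The admissible $h_2$ are exactly those in $W_0^\perp\setminus\{0,h_1\}$, a fraction of order $2^{-M}$ of all choices, and no averaging argument will find them.

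The fallback fails too. A flat that escapes one cut and is cut by all the remaining ones ends with dimension $d+1\in(d,2d]$. By the paper's Remark, such a flat admits no APLPS refinement when $d\ge1$. Refining a surviving flat into $d$-flats partway through does not help either, because the cuts still to come would shrink those $d$-flats further.

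\textbf{The missing idea.} Choose all normals inside the annihilator of a single spread element: take $h_1,\dots,h_c\in W_0^\perp$ linearly independent.
\begin{itemize}
\item For every $j\ne0$ we have $\mathbb F_q^{2M}=W_0\oplus W_j$. So restriction $W_0^\perp\to W_j^*$ is injective, hence an isomorphism.
\item Therefore $h_1,\dots,h_c$ remain independent on each $W_j$, and no direction is ever parallel to a cut.
\item Each $W_j$ with $j\ne0$ meets $A=\{x:\langle h_s,x\rangle=1,\ s=1,\dots,c\}$ in a flat of dimension exactly $M-c=d$.
\item $W_0\cap A=\varnothing$, since every $h_s$ vanishes on $W_0$.
\end{itemize}
Your own observation that every point of $A$ is nonzero then gives the partition of $A\cong\mathbb F_q^{m}$ immediately. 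This is precisely the paper's proof, with $u_j\in V_\infty^\perp$, carried out as a single codimension-$(m-2d)$ section rather than iteratively.
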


\begin{proof}
The case $d = 0$ is trivial. Let $d \ge 1$. Put $k = m - d$, then $N = 2k = 2(m - d)$ is even. By Lemma~\ref{spread_lemma_1}, there exists a full linear spread in $\mathbb{F}_q^N$:
$$
\mathcal{S} = \{V_\alpha\}_{\alpha \in I} \cup \{V_\infty\},
$$
where $|I| = q^k$, $\dim V_\alpha = \dim V_\infty = k$, and any two distinct subspaces intersect in $\{0\}$.

Since the scalar product is nondegenerate, $\dim V_\infty^\perp = N - k = k = m-d$.

From $m \ge 2d+1$ we get $m-2d \ge 1$ and $m-2d \le k$. Choose linearly independent vectors
$$
u_1, \dots, u_{m-2d} \in V_\infty^\perp.
$$

For each $\alpha \in I$, the space $\mathbb{F}_q^N$ is the direct sum $V_\infty \oplus V_\alpha$. The map $u \mapsto \langle u, \cdot \rangle|_{V_\alpha}$ gives an isomorphism $V_\infty^\perp \to V_\alpha^*$ (injectivity: if $\langle u, x\rangle = 0$ for all $x \in V_\alpha$, then $u$ is orthogonal to the whole $V_\infty \oplus V_\alpha = \mathbb{F}_q^N$, so $u=0$; the dimensions coincide). Hence the linear forms $\langle u_1, \cdot \rangle, \dots, \langle u_{m-2d}, \cdot \rangle$, restricted to $V_\alpha$, are linearly independent.

Define the affine subspace
$$
A = \bigcap_{j=1}^{m-2d} \{ x \in \mathbb{F}_q^N \mid \langle u_j, x\rangle = 1 \}.
$$
Since the $u_j$ are independent, this system is consistent and $\dim A = N - (m-2d) = m$.

For $\alpha \in I$, put $\widetilde{A}_\alpha = V_\alpha \cap A$. From the independence of the restrictions of $\langle u_j, \cdot \rangle$ to $V_\alpha$, it follows that the system of equations $\langle u_j, x\rangle = 1$ ($j=1,\dots,m-2d$) on $V_\alpha$ is consistent; its solution set is an affine subspace of dimension
$$
\dim V_\alpha - (m-2d) = k - (m-2d) = d.
$$

If $\alpha \neq \beta$, then $\widetilde{A}_\alpha \cap \widetilde{A}_\beta \subseteq V_\alpha \cap V_\beta = \{0\}$, but $0 \notin A$ (since $\langle u_j, 0\rangle = 0 \neq 1$), so the $\widetilde{A}_\alpha$ are pairwise disjoint.

The cardinality of the union is
$$
\Bigl| \bigcup_{\alpha \in I} \widetilde{A}_\alpha \Bigr| = |I| \cdot q^d = q^k \cdot q^d = q^m = |A|,
$$
hence $\{\widetilde{A}_\alpha\}_{\alpha \in I}$ is a partition of $A$.

The linear subspace whose shift is $\widetilde{A}_\alpha$ is
$$
\widetilde{V}_\alpha = V_\alpha \cap \bigcap_{j=1}^{m-2d} \{ x \mid \langle u_j, x\rangle = 0 \},
$$
and $\widetilde{V}_\alpha \subset V_\alpha$, so $\widetilde{V}_\alpha \cap \widetilde{V}_\beta = \{0\}$ for $\alpha \neq \beta$.

It remains to choose an arbitrary affine isomorphism $\varphi : A \to \mathbb{F}_q^m$ (a shift and a linear bijection). The images $\varphi(\widetilde{A}_\alpha)$ form the desired APLPS-partition of $\mathbb{F}_q^m$.
\end{proof}

\begin{remark}{\bf (necessary condition)}
If $d < m \le 2d$ (i.e. $m \ge 2d+1$ is violated and the partition is nondegenerate), an
$(m,d)$-APLPS-partition does not exist. Indeed, let $\{A_i = V_i + a_i\}$ be such a partition.
If $m<2d$, then by Grassmann's formula $\dim(V_i\cap V_j)\ge 2d-m>0$, contradicting $V_i\cap V_j=\{0\}$.
It remains to consider $m=2d$. After translating the whole partition, assume $A_1=V_1$ and $A_2=V_2+a$. Since $V_1\cap V_2=\{0\}$ and $\dim V_1+\dim V_2=m$, we have
$\mathbb F_q^m=V_1\oplus V_2$.
Write $a=v_1+v_2$, with $v_i\in V_i$. Then
$v_1=a-v_2\in V_2+a=A_2$, and also $v_1\in V_1=A_1$, a contradiction.
\end{remark}

For $q=2$ and $m=2d+1$ (the more general case will not be needed), we can achieve an additional property. The proof of Lemma~\ref{spread_lemma_3} is based on the standard model of the full spread via the field $\mathbb F_{2^{d+1}}$ and the choice of an element with trace equal to one.

\begin{lemma}\label{spread_lemma_3}{\bf (orthogonal subspace property for $q=2$, $m=2d+1$)}
Let $q = 2$, $m = 2d+1$. Then there exists an $(m,d)$-APLPS-partition such that for any nonzero $z \in \mathbb{F}_2^m$, the number of linear subspaces $V_i$ of the partition for which $z \in V_i^\perp$ is at most two.
\end{lemma}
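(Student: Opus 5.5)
The plan is to rerun the construction from the proof of Lemma~\ref{spread_lemma_2} in the special case $q=2$, $m=2d+1$, using the explicit field model of the spread from Lemma~\ref{spread_lemma_1}. Then the condition $z\in V_i^\perp$ becomes an explicit equation over $\mathbb F_{2^{d+1}}$, and we count its solutions $\alpha$.

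Put $k=d+1$ and $N=2k$, and write $\mathbb F_2^N=\{(x,y): x,y\in\mathbb F_{2^k}\}$. Use the bilinear form $\langle (x,y),(x',y')\rangle=\mathrm{Tr}(xx'+yy')$, where $\mathrm{Tr}$ is the absolute trace $\mathbb F_{2^k}\to\mathbb F_2$. This form is nondegenerate, so it can replace the standard dot product throughout the proof of Lemma~\ref{spread_lemma_2}, since only nondegeneracy was used there. Take the spread $V_\alpha=\{(x,\alpha x)\}$ and $V_\infty=\{(0,y)\}$. Then $V_\infty^\perp=\{(c,0)\}$. Since $m-2d=1$, we need a single vector $u_1=(c,0)$ with $c\neq 0$; for definiteness take $c$ with $\mathrm{Tr}(c)=1$.

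Let $W=\{(x,y):\mathrm{Tr}(cx)=0\}$, a linear hyperplane of dimension $2d+1$, and let $A=W+a$ with $\mathrm{Tr}(ca_1)=1$. By Lemma~\ref{spread_lemma_2} the sets $\widetilde A_\alpha=V_\alpha\cap A$ partition $A$, and their direction spaces are
\[
\widetilde V_\alpha=\{(x,\alpha x):\mathrm{Tr}(cx)=0\}\subset W .
\]
The affine isomorphism $\varphi:A\to\mathbb F_2^m$ has a linear part $\psi:W\to\mathbb F_2^m$. For nonzero $z\in\mathbb F_2^m$, the functional $w\mapsto\langle z,\psi(w)\rangle$ is a nonzero linear functional on $W$. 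The condition $z\in(\psi\widetilde V_\alpha)^\perp$ says exactly that this functional vanishes on $\widetilde V_\alpha$. So it suffices to prove the claim intrinsically: every nonzero linear functional on $W$ vanishes on at most two of the $\widetilde V_\alpha$.

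Every functional on $W$ is the restriction of some $\ell_{(s,t)}(x,y)=\mathrm{Tr}(sx+ty)$. Two pairs give the same restriction exactly when they differ by an element of $W^\perp=\{0,(c,0)\}$. Hence a nonzero functional on $W$ corresponds to a pair $(s,t)\notin\{(0,0),(c,0)\}$. On $\widetilde V_\alpha$ we have
\[
\ell_{(s,t)}(x,\alpha x)=\mathrm{Tr}\bigl((s+\alpha t)x\bigr).
\]
This vanishes for all $x$ in the hyperplane $\{x:\mathrm{Tr}(cx)=0\}$ if and only if $s+\alpha t\in\{0,c\}$. This equivalence is the one step needing care, and it follows from nondegeneracy of the trace form on $\mathbb F_{2^k}$: the annihilator of that hyperplane is $\{0,c\}$.

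Now count solutions $\alpha$. If $t\neq0$, the only possibilities are $\alpha=s/t$ and $\alpha=(s+c)/t$, so at most two values. If $t=0$, then $s\notin\{0,c\}$ by the choice of $(s,t)$, so there are no solutions. This proves the bound of two.

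The main obstacle is bookkeeping rather than depth. One must justify transporting orthogonality from the standard dot product on $\mathbb F_2^m$ back to functionals on $W$, and identify those functionals correctly as pairs $(s,t)$ modulo $(c,0)$. I would handle both points through the duality statement above, avoiding explicit coordinates for $\varphi$.
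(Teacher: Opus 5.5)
Your proposal is correct and is essentially the paper's proof. It uses the same trace-form model of the spread, the same hyperplane cut by a vector $(c,0)\in V_\infty^\perp$, the same key condition $s+\alpha t\in\{0,c\}$ (the paper's $u_0+\alpha v_0=ca$), and the same case split on $t$ (the paper's $v_0$).

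The only difference is bookkeeping in the duality step. You treat the pulled-back functional as a pair $(s,t)$ modulo $W^\perp=\{0,(c,0)\}$, so the case $t=0$ is excluded simply because the functional is nonzero, and your choice $\operatorname{Tr}(c)=1$ is never actually used. The paper instead represents the functional by a vector of $u^\perp$ itself. That requires $u$ to be non-isotropic, i.e.\ $\operatorname{Tr}(a)=1$, and the paper then rules out $v_0=0$ via $\operatorname{Tr}(a^2)=\operatorname{Tr}(a)=1$.
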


\begin{proof}
For $m = 2d+1$, the parameters in Lemma~\ref{spread_lemma_2} are:
$k = m - d = d+1$, $N = 2k = 2d+2$, and $m-2d = 1$. Thus one vector $u \in V_\infty^\perp \setminus \{0\}$ is used. The partition is constructed from a full linear spread in $\mathbb{F}_2^{2k}$; the linear subspaces of the partition (after the isomorphism) are
$$
\widetilde{V}_\alpha = V_\alpha \cap u^\perp \quad (\alpha \in I),
$$
where $u^\perp = \{ x \in \mathbb{F}_2^{2k} \mid \langle u, x\rangle = 0\}$.

Realise the full spread in the standard way via the field $\mathbb{F}_{2^k}$ as in the proof of Lemma~\ref{spread_lemma_1}. Identify $\mathbb{F}_2^k$ with $\mathbb{F}_{2^k}$ and represent $\mathbb{F}_2^{2k}$ as the set of pairs $(x, y)$ with $x, y \in \mathbb{F}_{2^k}$. Define the scalar product by
$$
\langle (u_1,u_2), (x,y) \rangle = \operatorname{Tr}(u_1 x + u_2 y),
$$
where $\operatorname{Tr}: \mathbb{F}_{2^k} \to \mathbb{F}_2$ is the absolute trace. This is a nondegenerate bilinear form.
Set
$$
V_\alpha = \{(x, \alpha x) \mid x \in \mathbb{F}_{2^k}\}, \qquad
V_\infty = \{(0, y) \mid y \in \mathbb{F}_{2^k}\}.
$$
Then $V_\infty^\perp = \{ (a,0) \mid a \in \mathbb{F}_{2^k} \}$, since $\langle (a,b), (0,y)\rangle = \operatorname{Tr}(b y) = 0$ for all $y$ implies $b=0$. Choose $u = (a,0) \in V_\infty^\perp$ with $\operatorname{Tr}(a) = 1$ (such $a$ exists since the trace is surjective). Then
$$
u^\perp = \{ (x,y) \mid \operatorname{Tr}(a x) = 0 \},
$$
and, according to the proof of Lemma~\ref{spread_lemma_2}, the linear directions of the affine subspaces of the constructed $(m,d)$-APLPS-partition (up to an affine isomorphism $A = \{ \langle u, x\rangle = 1 \} \cong \mathbb{F}_2^m$) are
$$
\widetilde{V}_\alpha = V_\alpha \cap u^\perp = \{ (x, \alpha x) \mid \operatorname{Tr}(a x) = 0 \}.
$$

Let $z \in \mathbb{F}_2^m$ be a nonzero vector.

The affine isomorphism \(A\to\mathbb F_2^m\) has a linear part \(T:u^\perp\to\mathbb F_2^m\).
For the images \(T\widetilde V_\alpha\), the condition
\[
z\in (T\widetilde V_\alpha)^\perp
\]
is equivalent to
\[
T^t z\in \widetilde V_\alpha^\perp.
\]
Since \(T^t\) is a bijection, it suffices to prove the required multiplicity bound for the original
spaces \(\widetilde V_\alpha\subset u^\perp\).
We may assume $z \in u^\perp$ and write $z = (u_0, v_0)$ with $\operatorname{Tr}(a u_0) = 0$.

We determine for which $\alpha$ we have $z \perp \widetilde{V}_\alpha$, i.e. $\widetilde{V}_\alpha \subseteq z^\perp$. This is equivalent to: for every $x \in \mathbb{F}_{2^k}$ with $\operatorname{Tr}(a x)=0$,
$$
0 = \langle (u_0, v_0), (x, \alpha x) \rangle = \operatorname{Tr}(u_0 x + v_0 \alpha x) = \operatorname{Tr}\bigl((u_0 + \alpha v_0) x\bigr).
$$
Thus the linear map $x \mapsto \operatorname{Tr}(w x)$ with $w = u_0 + \alpha v_0$ is identically zero on the hyperplane $\{x \mid \operatorname{Tr}(a x) = 0\}$. This is possible only if $w$ is proportional to $a$ with coefficient in $\mathbb{F}_2$. Hence
$$
u_0 + \alpha v_0 = c a, \qquad c \in \{0,1\}.
$$

Consider two cases.
\begin{enumerate}
\item $v_0 \neq 0$. Then from the equation we find $\alpha = u_0 v_0^{-1}$ (if $c=0$) or $\alpha = (u_0 + a) v_0^{-1}$ (if $c=1$). We get at most two solutions.
\item $v_0 = 0$. Then $u_0 = c a$. Since $z \in u^\perp$, we have $\operatorname{Tr}(a u_0) = 0$. If $c=1$, then $u_0 = a$ and $\operatorname{Tr}(a u_0) = \operatorname{Tr}(a^2) = \operatorname{Tr}(a) = 1$ by the choice of $a$, a contradiction. Hence $c=0$, so $u_0 = 0$, giving $z = (0,0)$, contradicting the choice of nonzero $z$. Thus the case $v_0 = 0$ is impossible for nonzero $z$.
\end{enumerate}

Therefore, for any nonzero $z$, there are at most two indices $\alpha$ with $z \perp \widetilde{V}_\alpha$. Under the affine isomorphism, this property means that in the constructed
APLPS-partition of $\mathbb{F}_2^m$, for every nonzero vector, its dual subspace contains at most two linear subspaces $V_i$ from the partition.
\end{proof}

Thus for $m = 2d+1$ and $q = 2$, the constructed $(m,d)$-APLPS-partition has the additional strong property: every nonzero dual subspace to a direction contains at most two directions of the partition.

Now we construct the function. Let $d\ge1$, $m=2d+1$. Take an $(m,d)$-APLPS-partition $\{A_i\}_{i=1}^N$, where $N=2^{m-d}=2^{d+1}$, $A_i=V_i+a_i$, $\dim V_i=d$, satisfying Lemma~\ref{spread_lemma_3}. Define
$$
f(x,y)=\sum_{i=1}^{N} \mathbf 1_{A_i}(x)(-1)^{y_i},\qquad x\in\mathbb F_2^m,\; y\in\mathbb F_2^N.
$$
This is a well-defined Boolean function since the sets $A_i$ partition $\mathbb F_2^m$.

We compute its spectrum support. Unlike \cite{HHL+}, there is no constant term, so by analogy with the computations in \cite{HHL+} (see the proof of Claim 3.5) we obtain\footnote{It is precisely through this spectrum support $\mathcal S$ that the generalization of the address function in \cite{KLT} was defined.}
$$
\mathcal S=\operatorname{supp}\widehat f = \bigcup_{i=1}^N \bigl(V_i^\perp \times \{e_i\}\bigr),
$$
where $e_i$ is the $i$-th unit vector in $\mathbb F_2^N$. The size of the support is
$$
|\mathcal S| = N\cdot 2^{\dim V_i^\perp} = 2^{d+1}\cdot 2^{m-d} = 2^{d+1}\cdot 2^{d+1} = 2^{2d+2}.
$$
Thus $|\mathcal S|^{1/2} = 2^{d+1}$.

We estimate the maximum intersection $|\mathcal S\cap(\mathcal S+\gamma)|$ for an arbitrary nonzero $\gamma=(\gamma_x,\gamma_y)\in\mathbb F_2^{m+N}$. Consider the possible values of the weight of $\gamma_y$.

\begin{itemize}
\item If $|\gamma_y|>2$ or $|\gamma_y|=1$, the intersection is empty, since all vectors in $\mathcal S$ have $y$-part equal to one of the $e_i$, and the difference of two such vectors can have weight $0$ or $2$.

\item If $|\gamma_y|=2$, then $\gamma_y=e_i+e_j$ for some $i\ne j$. In this case the intersection can only be between the components $(V_i^\perp\times\{e_i\})$ and $(V_j^\perp\times\{e_j\})+\gamma$. For each $v\in V_i^\perp$ such that $v+\gamma_x\in V_j^\perp$, the vector $(v,e_i)$ belongs to the intersection, since
$$
(v,e_i)+\gamma = (v+\gamma_x,\, e_j) \in V_j^\perp\times\{e_j\}\subset \mathcal S.
$$
Similarly, the vector $(v+\gamma_x,e_j)$ also belongs to the intersection, because
$$
(v+\gamma_x,e_j)+\gamma = (v,e_i)\in \mathcal S.
$$
Thus each pair $(v, v+\gamma_x)$ with $v\in V_i^\perp\cap(V_j^\perp+\gamma_x)$ corresponds to exactly two elements of the intersection. By \cite[Claim 3.6]{HHL+} (applied to $V_i,V_j$ with $V_i\cap V_j=\{0\}$), the number of such $v$ is $2^{m-2d}$. Hence
$$
|\mathcal S\cap(\mathcal S+\gamma)| = 2\cdot 2^{m-2d}.
$$
In our case $m=2d+1$, so $|\mathcal S\cap(\mathcal S+\gamma)| = 2\cdot 2 = 4$, which does not exceed $2^{d+1}$ for all $d\ge1$. Note that $|\mathcal O_\gamma| = |\mathcal S\cap(\mathcal S+\gamma)|/2 = 2$, which is consistent with Titsworth's theorem allowing exactly two pairs in a folding.

\item If $|\gamma_y|=0$, then $\gamma=(\gamma_x,0)$. Then
$$
|\mathcal S\cap(\mathcal S+\gamma)| = \sum_{i=1}^N |V_i^\perp \cap (V_i^\perp+\gamma_x)| = \sum_{i:\gamma_x\in V_i^\perp} |V_i^\perp|.
$$
By Lemma~\ref{spread_lemma_3}, the number of indices $i$ with $\gamma_x\in V_i^\perp$ is at most $2$ (for any nonzero $\gamma_x$; the case $\gamma_x=0$ gives $\gamma=0$, which is excluded). Therefore,
$$
|\mathcal S\cap(\mathcal S+\gamma)| \le 2\cdot 2^{d+1} = O(2^{d+1}) = O(|\mathcal S|^{1/2}).
$$
\end{itemize}

Thus for any nonzero $\gamma$,
$$
|\mathcal S\cap(\mathcal S+\gamma)| \le 2^{d+2}= O(|\mathcal S|^{1/2}).
$$

\begin{remark}
The estimate $|\mathcal S\cap(\mathcal S+\gamma)| \le 2^{d+2}$ is tight: there exists a nonzero vector $\gamma=(\gamma_x,0)\in\mathbb F_2^{m+N}$ such that $|\mathcal S\cap(\mathcal S+\gamma)| = 2^{d+2}$.

Indeed, the subspaces $V_i^\perp$ have dimension $d+1$ in $\mathbb F_2^m$, where $m=2d+1$, and their number is $N=2^{d+1}$. If every nonzero vector belonged to at most one $V_i^\perp$, then the union of all $V_i^\perp$ would contain
$$
1 + N(2^{d+1}-1) = 1 + 2^{d+1}(2^{d+1}-1) = 2^{2d+2} - 2^{d+1} +1
$$
vectors, which is strictly greater than $2^{2d+1}$ (the size of the whole space) for all $d\ge1$. Hence there exists a nonzero vector $\gamma_x$ belonging to at least two distinct subspaces $V_i^\perp$. By Lemma~\ref{spread_lemma_3}, it cannot belong to more than two, so it belongs to exactly two. For such $\gamma_x$, with $\gamma=(\gamma_x,0)$, we have
$$
|\mathcal S\cap(\mathcal S+\gamma)| = \sum_{i:\gamma_x\in V_i^\perp} |V_i^\perp| = 2\cdot 2^{d+1} = 2^{d+2}.
$$
\end{remark}

Since $|\mathcal S\cap(\mathcal S+\gamma)| = |(\mathcal S+\gamma_1)\cap(\mathcal S+\gamma_2)|$ for $\gamma=\gamma_1+\gamma_2$, we obtain the following theorem.

\begin{theorem}\label{thm:improved}
For every $d\ge1$ there exists a Boolean function $f$ on $n = (2d+1)+2^{d+1}$ variables with spectrum support $\mathcal S$ satisfying
$$
|\mathcal S| = 2^{2d+2},
$$
and for any distinct $\gamma_1,\gamma_2$,
$$
|(\mathcal S+\gamma_1)\cap(\mathcal S+\gamma_2)| = O(|\mathcal S|^{1/2}).
$$
\end{theorem}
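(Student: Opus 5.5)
The proof assembles the construction and estimates already carried out above. For $d\ge1$ put $m=2d+1$ and take the $(m,d)$-APLPS-partition $\{A_i=V_i+a_i\}_{i=1}^{N}$, $N=2^{d+1}$, from Lemma~\ref{spread_lemma_3}. We then consider
$$
f(x,y)=\sum_{i=1}^{N}\mathbf 1_{A_i}(x)(-1)^{y_i}
$$
on $n=m+N=(2d+1)+2^{d+1}$ variables. Because the $A_i$ partition $\mathbb F_2^m$, exactly one term is nonzero at each $x$, so $f$ is $\{\pm1\}$-valued.

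The first step is the spectrum computation. We expand $\widehat f(\alpha,\beta)=\sum_i \widehat{\mathbf 1_{A_i}}(\alpha)\,\widehat{\chi_{e_i}}(\beta)$. The factor in $y$ is nonzero only when $\beta=e_i$. By \cite[Fact~2.1]{HHL+}, $\widehat{\mathbf 1_{A_i}}(\alpha)=\pm 2^{-(m-d)}$ for $\alpha\in V_i^\perp$ and $0$ otherwise. Since distinct $i$ give distinct $\beta=e_i$, no cancellation can occur between terms. Hence $\mathcal S=\bigcup_i V_i^\perp\times\{e_i\}$, and $|\mathcal S|=N\cdot 2^{d+1}=2^{2d+2}$.

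Next we reduce the claim to estimating $|\mathcal S\cap(\mathcal S+\gamma)|$ for $\gamma=\gamma_1+\gamma_2\neq0$. This reduction holds because $|(\mathcal S+\gamma_1)\cap(\mathcal S+\gamma_2)|=|\mathcal S\cap(\mathcal S+\gamma_1+\gamma_2)|$, which follows by translating by $\gamma_1$. We write $\gamma=(\gamma_x,\gamma_y)$ and split into cases according to the weight of $\gamma_y$.

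\textbf{Weight $1$ or weight at least $3$.} Every $y$-part of $\mathcal S$ is some $e_i$, so the difference of two $y$-parts has weight $0$ or $2$. The intersection is therefore empty.

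\textbf{Weight $2$, $\gamma_y=e_i+e_j$.} Elements of the intersection come in pairs $(v,e_i),(v+\gamma_x,e_j)$ with $v\in V_i^\perp\cap(V_j^\perp+\gamma_x)$. Since $V_i\cap V_j=\{0\}$, we have $V_i^\perp+V_j^\perp=(V_i\cap V_j)^\perp=\mathbb F_2^m$. The coset intersection is then nonempty of size $|V_i^\perp\cap V_j^\perp|=2^{m-2d}=2$. This gives at most $4$ elements.

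\textbf{Weight $0$, $\gamma=(\gamma_x,0)$ with $\gamma_x\ne0$.} Each block contributes $|V_i^\perp|$ exactly when $\gamma_x\in V_i^\perp$, and contributes nothing otherwise. Lemma~\ref{spread_lemma_3} says at most two indices $i$ satisfy $\gamma_x\in V_i^\perp$. The bound is therefore at most $2\cdot2^{d+1}=2^{d+2}$.

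Combining the cases gives $|(\mathcal S+\gamma_1)\cap(\mathcal S+\gamma_2)|\le 2^{d+2}=2|\mathcal S|^{1/2}$, which proves the theorem.

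The genuinely nontrivial ingredient is the weight-$0$ case. It rests on the multiplicity-two property of Lemma~\ref{spread_lemma_3}, which we are allowed to assume. Without it, a generic APLPS-partition could place one $\gamma_x$ in many $V_i^\perp$ and produce a large folding. The only remaining care is checking that the spectrum has no cancellations, and the distinct $e_i$ tags guarantee this.
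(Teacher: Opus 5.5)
Your proposal is correct and follows the paper's proof step for step. It uses the same function built from the APLPS-partition of Lemma~\ref{spread_lemma_3}, the same spectrum $\bigcup_i V_i^\perp\times\{e_i\}$, the same case split on the weight of $\gamma_y$, and the same multiplicity-two property in the weight-$0$ case. The only minor difference is that you prove the weight-$2$ count directly from $V_i^\perp+V_j^\perp=(V_i\cap V_j)^\perp=\mathbb F_2^m$ instead of citing \cite[Claim~3.6]{HHL+}, and you derive the spectrum explicitly rather than by analogy with \cite{HHL+}.
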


This substantially strengthens the folding estimate in the construction underlying Theorem 1.3 of \cite{HHL+} (or, more precisely, Theorem 3.1 of \cite{HHL+}).
We construct a different family of Boolean functions for which the exponent $5/6$ in the folding estimate in \cite{HHL+}  is replaced by $1/2$. Note that $n$ in our construction is of order $2^{d+1}$, and $|\mathcal S|$ is exponential in $d$, so we obtain an infinite series of examples.

In the next section, we discuss the consequences of this theorem for greedy algorithms for constructing parity decision trees.

\section{Consequences for greedy approaches to PDT depth}

In this section we discuss what lower bounds on the depth of parity decision trees (PDTs) follow from restrictions on the size of the maximum intersection of spectrum supports, and how our improved estimate of Theorem~\ref{thm:improved} affects the prospects of greedy methods.

We recall the basic construction. Let $f:\mathbb F_2^n\to\{-1,1\}$ be a Boolean function, $\mathcal S=\operatorname{supp}\widehat f$. For an arbitrary nonzero $\gamma\in\mathbb F_2^n$, consider the two subfunctions corresponding to the restrictions to the hyperplanes $\langle \gamma,x\rangle=0$ and $=1$.
Instead of the restricted functions themselves, it is convenient first to consider their extensions by zero to the whole space:
$$
f_b(x)=f(x)\frac{1+(-1)^{\langle\gamma,x\rangle+b}}2,\qquad b\in\{0,1\}.
$$
Their Fourier coefficients satisfy
$$
\widehat{f_b}(\alpha)=
\frac12\left(\widehat f(\alpha)+(-1)^b\widehat f(\alpha+\gamma)\right).
$$
For the actual restriction to the hyperplane $\langle\gamma,x\rangle=b$, the vectors
$\alpha$ and $\alpha+\gamma$ are identified.

Let
$$
s=|\mathcal S|,\qquad t=|\mathcal S\cap(\mathcal S+\gamma)|.
$$

The number $t$ counts elements of $\mathcal S$ whose $\gamma$-partner is also in $\mathcal S$. Therefore the number of folded pairs is $t/2$.

After restriction in the direction $\gamma$:
\begin{itemize}
\item
vectors from $\mathcal S\setminus(\mathcal S+\gamma)$ appear in both branches;
\item
a folded pair $\{\alpha,\alpha+\gamma\}\subset\mathcal S$ contributes to at least one of the two branches, possibly to both, depending on whether cancellation occurs.
\end{itemize}
Thus for the actual restricted functions $g_0,g_1$,
$$
[ |\operatorname{supp}\widehat{g_0}|+ |\operatorname{supp}\widehat{g_1}| \ge 2(s-t)+\frac t2 =
2s-\frac{3t}{2}=2|\mathcal S| - |\mathcal S\cap(\mathcal S+\gamma)|.
$$

Consequently, at least one branch satisfies

$$
|\operatorname{supp}\widehat{g_b}| \ge s-\frac{3t}{4}=
|\mathcal S|-\frac{3}{4}\cdot |\mathcal S\cap(\mathcal S+\gamma)|.
$$

If $t=|\mathcal S\cap(\mathcal S+\gamma)|\le C|\mathcal S|^\beta$ with $\beta<1$, then after the query
there exists at least one branch whose Fourier support has size at least
$$
|\mathcal S|-O(|\mathcal S|^\beta).
$$
Thus, for lower-bounding the depth of a decision tree, one may follow such a branch.

Now consider a greedy algorithm for constructing a PDT.

Suppose that during a PDT construction, at every current node with Fourier support
\(\mathcal S_i\), one has
\begin{equation}\label{eq:tag_3_20}
\max_{\gamma\ne0}|\mathcal S_i\cap(\mathcal S_i+\gamma)|
\le C|\mathcal S_i|^\beta
\end{equation}
for some \(\beta<1\). Then there exists a branch along which
\[
|\mathcal S_{i+1}|\ge |\mathcal S_i|-C'|\mathcal S_i|^\beta.
\]
Consequently, the tree has depth at least
\[
\Omega(|\mathcal S_0|^{1-\beta}).
\]
In particular, if \(\beta=1/2\), this argument gives a lower bound
\[
\Omega(|\mathcal S_0|^{1/2})
\]
on the depth of such a query process. Since the general upper bound
\[
\operatorname{PDT}(f)=O(|\mathcal S|^{1/2})
\]
is already known, an argument relying only on such maximum-folding estimates cannot improve the general state of the art.

Our Theorem~\ref{thm:improved} provides a construction with $\beta=1/2$. So, assuming the inheritance of estimate \eqref{eq:tag_3_20} for all restrictions, we obtain the following.

\begin{corollary}
Suppose that along every branch produced by a query strategy the current Fourier support
$\mathcal S_i$ satisfies
$$
\max_{\gamma\ne0}|\mathcal S_i\cap(\mathcal S_i+\gamma)|
\le
C|\mathcal S_i|^{1/2}.
$$
Then this strategy has a branch of length at least
$$
\Omega(|\mathcal S|^{1/2}).
$$
In particular, an analysis based only on such maximum-folding estimates cannot yield a
general PDT upper bound better than $O(|\mathcal S|^{1/2})$.
\end{corollary}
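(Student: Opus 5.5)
We will follow a single branch of the query strategy and track the size $s_i=|\mathcal S_i|$ of the current Fourier support along it, starting from $s_0=|\mathcal S|$. At a node with support $\mathcal S_i$ the strategy queries some parity $\gamma\ne0$. We apply the counting argument established just above: for the two restricted functions $g_0,g_1$ we have $|\operatorname{supp}\widehat{g_0}|+|\operatorname{supp}\widehat{g_1}|\ge 2s_i-\frac32 t$ with $t=|\mathcal S_i\cap(\mathcal S_i+\gamma)|$. Hence some child satisfies $s_{i+1}\ge s_i-\frac34 t\ge s_i-\frac34 C s_i^{1/2}$, using the hypothesis of the corollary at node $i$. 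We always descend into such a child.

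The key observation is that along this branch the supports never increase beyond $s_0$; the inequality only gives a lower bound, and $s_{i+1}\le s_i$ anyway since restriction identifies $\alpha$ with $\alpha+\gamma$. Therefore $s_{i+1}\ge s_i-C' s_0^{1/2}$ with $C'=\frac34 C$. Iterating gives
\[
s_j\ge s_0-jC's_0^{1/2}.
\]
A leaf is a constant function $\pm1$, whose support has size $1$. Reaching it therefore requires $s_0-jC's_0^{1/2}\le 1$, so $j\ge (s_0-1)/(C's_0^{1/2})=\Omega(s_0^{1/2})$. This is the claimed branch length.

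The only delicate point is justifying that the per-step loss is bounded uniformly by $C's_0^{1/2}$ rather than by a bound depending on the current $s_i$. Monotonicity $s_i\le s_0$ handles this cleanly, since $s_i^{1/2}\le s_0^{1/2}$. If one prefers a sharper statement, one can compare with the ODE $ds/dj=-C's^{1/2}$, whose solution gives $j\ge \frac{2}{C'}(s_0^{1/2}-1)$. This is not needed for the $\Omega$ bound.

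The final sentence of the corollary is then immediate. Any upper-bound analysis that uses only the maximum-folding estimate cannot rule out a branch of this length. The guarantee it can give on depth is therefore at best $O(|\mathcal S|^{1/2})$, which coincides with the known bound $\operatorname{PDT}(f)=O(|\mathcal S|^{1/2})$. Theorem~\ref{thm:improved} shows the hypothesis holds at the root for an explicit family, but the corollary itself is conditional on inheritance by every node of the branch.
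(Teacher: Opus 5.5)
Your proof is correct and follows the paper's argument. You take the same per-step bound $|\mathcal S_{i+1}|\ge|\mathcal S_i|-\frac34 C|\mathcal S_i|^{1/2}$ along a chosen branch and iterate it down to a leaf; your observation $s_{i+1}\le s_i$ and the leaf endpoint $s_j=1$ supply the details that the paper's ``consequently'' leaves implicit. (Your unused ODE aside overstates the constant: the discrete recursion only gives $\sqrt{s_i}-\sqrt{s_{i+1}}\le C'$, hence $j\ge (s_0^{1/2}-1)/C'$, not $\frac{2}{C'}(s_0^{1/2}-1)$.)
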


This result is final in the following sense: it is known that for every Boolean function $f$, $\mathrm{PDT}(f)=O(|\mathcal S|^{1/2})$ (see \cite{TWXZ13,MS24}). Thus the lower bound $\Omega(|\mathcal S|^{1/2})$ matches the upper bound up to constants, meaning that the greedy approach, within this analysis, cannot yield anything better than the already known general upper bound. In other words, to obtain polylogarithmic depth (which is required by the log-rank conjecture for XOR functions), one must use more subtle properties beyond just tracking the maximum folding.

However, it should be emphasised that this does not imply an actual PDT lower bound for our functions, because the condition above need not be inherited by restrictions.  In general, this assumption is false. For example, for the classical address function $\operatorname{Add}_m$ (and its generalisations), the PDT depth is $O(m)=O(\log|\mathcal S|)$, even if its spectrum support does not have relatively large folding. This means that for some order of decompositions along directions, the folding size inevitably increases significantly when passing to restrictions. Consequently, the claim that the greedy approach is refuted holds only for the \emph{lazy} variant, where the researcher hopes to obtain a lower bound on PDT depth by simply substituting a global bound like \eqref{eq:tag_3_20} but {\it lower} (i.e., with an inverted inequality sign) at each step without analysing the specific structure of the restrictions.

It remains an open question whether one can construct a greedy algorithm that adaptively analyses the current support at each step and finds directions with large intersections, even though the global estimate \eqref{eq:tag_3_20} may fail.

Our construction shows that no universal lower bound of the form
$$
\max_{\gamma\ne0}|\mathcal S\cap(\mathcal S+\gamma)|
\ge c|\mathcal S|^\beta
$$
can hold for any $\beta>1/2$.
At the same time it do not rule out the possibility of using other ideas to achieve polylogarithmic depth. Thus, although the <<lazy>> greedy approach is refuted, a complete refutation of greedy strategies in general remains open.

\end{document}